\newtheorem{theorem}{Theorem}
\newtheorem{corollary}[theorem]{Corollary}
\def\RR{{\mathbb R}}
\def\SSS{{\mathcal S}}
\newcommand{\kk}{\texorpdfstring{$k$}{k}}
\newcommand{\ua}{{\ensuremath{\uparrow}}}
\newcommand{\da}{{\ensuremath{\downarrow}}}
\def\Gt{%
   \@ifnextchar[%
     {\mycommand@i}
     {\mycommand@i[t]}%
}
\def\mycommand@i[#1]{%
   \@ifnextchar[%
     {\mycommand@ii{#1}}
     {\mycommand@ii{#1}[c, \circ]}%
}
\def\mycommand@ii#1[#2]{%
    \ensuremath{
  G_{#1}^{#2}\xspace
  }
}
\def\Cp{%
   \@ifnextchar[%
     {\mycommandd@i}
     {\mycommandd@i[t]}%
}
\def\mycommandd@i[#1]{%
   \@ifnextchar[%
     {\mycommandd@ii{#1}}
     {\mycommandd@ii{#1}[c, \circ]}%
}
\def\mycommandd@ii#1[#2]{%
    \ensuremath{
    C_{#1}^{#2,\text{rem}}\xspace
  }
}
\def\Cclient{%
   \@ifnextchar[%
     {\mycommanddd@i}
     {\mycommanddd@i[t]}%
}
\def\mycommanddd@i[#1]{%
   \@ifnextchar[%
     {\mycommanddd@ii{#1}}
     {\mycommanddd@ii{#1}[c, \circ]}%
}
\def\mycommanddd@ii#1[#2]{%
    \ensuremath{
    C_{#1}^{#2}\xspace
  }
}
\newcommand{\ViewText}[1]
{\ifthenelse{\boolean{ViewTextBool}}{#1}{}}
\begin{document}



\title{Clustering Graphs of Bounded Treewidth\\ to Minimize the Sum of Radius-Dependent Costs}

\author{
Lukas Drexler\thanks{Department of Computer Science, Heinrich-Heine-University, Germany. Email: \texttt{lukas.drexler@hhu.de}. Funded by the Deutsche Forschungsgemeinschaft (DFG, German Research Foundation) – Project Number 456558332.}
\and 
Jan Höckendorff\thanks{Department of Mathematics and Computer Science, University of Cologne, Germany. Email: \texttt{hoeckendorff@cs.uni-koeln.de}. Funded by the Deutsche Forschungsgemeinschaft (DFG, German Research Foundation) – Project Number 459420781.}
\and
Joshua Könen\thanks{Institute of Computer Science, University of Bonn, and Lamarr Institute for Machine Learning and Artificial Intelligence, Germany. Email: \texttt{s6jjkoen@uni-bonn.de}}
\and
Kevin Schewior\thanks{Department of Mathematics and Computer Science, University of Southern Denmark, Denmark. Email: \texttt{kevs@sdu.dk}. Partially supported by the Independent Research Fund Denmark, Natural Sciences, grant DFF-0135-00018B.}}

\date{}




\maketitle

\begin{abstract}
We consider the following natural problem that generalizes min-sum-radii clustering: Given is $k\in\mathbb{N}$ as well as some metric space $(V,d)$ where $V=F\cup C$ for facilities $F$ and clients $C$. The goal is to find a clustering given by $k$ facility-radius pairs $(f_1,r_1),\dots,(f_k,r_k)\in F\times\mathbb{R}_{\geq 0}$ such that $C\subseteq B(f_1,r_1)\cup\dots\cup B(f_k,r_k)$ and $\sum_{i=1,\dots,k} g(r_i)$ is minimized for some increasing function $g:\mathbb{R}_{\geq 0}\rightarrow\mathbb{R}_{\geq 0}$. Here, $B(x,r)$ is the radius-$r$ ball centered at $x$. For the case that $(V,d)$ is the shortest-path metric of some edge-weighted graph of bounded treewidth, we present a dynamic program that is tailored to this class of problems and achieves a polynomial running time, establishing that the problem is in \textsf{XP} with parameter treewidth.
\end{abstract}

\section{Introduction}

We study $k$-clustering problems in which the objective is minimizing the sum of cluster costs, each of which is a function of its radius, i.e., the maximum distance from the respective cluster center to a point in the cluster. We call this problem class MSRDC (Minimum Sum of Radius-Dependent Costs). Such problems arise, e.g., in wireless network design~\cite{DBLP:journals/cn/Lev-TovP05}. These objectives are different from the extensively studied  $k$-median~\cite{jain1999primal,charikar2002constant,har2004coresets, arya2004local, li2016approximating}  and $k$-means~\cite{har2004coresets,kanungo2004local,vassilvitskii2006k} objectives in that they do not directly count the individual points' distance to their respective cluster center, and it is also different from the well-known $k$-center~\cite{hochbaum1985heuristic,dyer1985heuristic,charikar2001outlier} objective in that not only the maximum cluster cost matters. Much less is known about MSRDC than about the aforementioned problems.

A specific objective from the class of studied objectives is the problem in which $\sum_{i=1}^k r_i^\alpha$ is to be minimized for some $\alpha\in\mathbb{R}_{\geq 1}$~\cite{DBLP:journals/cn/Lev-TovP05}. Especially the version with $\alpha=1$, the min-sum-radii objective, has recieved some attention in the literature. Unfortunately, at least from an exact-computation perspective, the known results are quite negative, even for this restricted problem: The problem is known to be NP-hard, even in metrics of constant doubling dimension and shortest-path metrics of weighted planar graphs~\cite{DBLP:journals/algorithmica/GibsonKKPV10}.

\subsection{Our Contribution}

Following the research directions of Katsikarelis et al.~\cite{DBLP:journals/dam/KatsikarelisLP22} as well as Baker et al.~\cite{DBLP:conf/icml/BakerBHJK020}, who consider (coresets for) clustering in shortest-path metrics of weighted bounded-treewidth graphs, our main result is as follows.

\begin{theorem}\label{thm:main}
    MSRDC in shortest-path metrics parameterized with the treewidth of the underlying graph is in XP.
\end{theorem}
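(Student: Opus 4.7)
The plan is a bottom-up dynamic program on a nice tree decomposition of $G$. I would first precompute all-pairs shortest-path distances, which lets me restrict the candidate radii to the set $R=\{d(u,v):u,v\in V\}$ of size $O(n^2)$; in particular there are only $|F|\cdot|R|=O(n^3)$ candidate center-radius pairs. Using a standard algorithm, I would build a nice tree decomposition $\mathcal{T}$ of $G$ with width $t=\mathrm{tw}(G)$.

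At each bag $X_b$, a DP state records a partition of $X_b$ into classes, each representing a distinct cluster that still has a representative in the bag; for each class, a chosen pair $(c,r)\in F\times R$; and a counter (together with the accumulated cost $\sum g(r)$) for clusters already \emph{closed} within the processed subtree. Since $|X_b|\le t+1$, the number of partitions is $2^{O(t\log t)}$, the pair-assignments contribute $n^{O(t)}$, and the counter takes $O(n)$ values, so the total number of states per bag is $n^{O(t)}$. The value $D[b,s]$ is the minimum total cost of a partial assignment consistent with $s$: every client already forgotten must lie in the ball of a class in $s$ or of a closed cluster, and every feasibility constraint $d(v,c)\le r$ must hold.

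Transitions follow the four node types of a nice tree decomposition. Leaves have an empty bag with value zero. At an introduce node, a new vertex is enumerated into each existing class (checking $d(v,c)\le r$) or opens a new class with a guessed pair. At a forget node, the vertex is removed from its class; whenever this leaves the class empty, we pay $g(r)$ once and increment the closed-counter. At a join node, two children with matching bag-partitions and pair-assignments are combined with additive bookkeeping for cluster counts and costs. At the root bag, the answer is the smallest $D[\cdot,s]$ over states whose total cluster count is at most $k$, padded appropriately if fewer than $k$ clusters were used.

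The principal obstacle is to ensure that each cluster used in the optimum is paid for \emph{exactly once}, even when its last bag representative is forgotten before the full extent of the cluster is known. The key structural fact is that the ball $B(c,r)$ is connected in $G$ (any point in the ball is joined to $c$ by a shortest path that itself lies in the ball), so by the standard property of tree decompositions, the trace $\{b\in\mathcal{T}:X_b\cap B(c,r)\neq\emptyset\}$ is a connected subtree. In particular, any cluster with vertices in both $V_b\setminus X_b$ and $V\setminus V_b$ must have some ball vertex lying in $X_b$. Exploiting this, I would enrich the state with succinct per-bag-vertex witnesses (and projections of remaining radii onto $X_b$) for pending clusters whose center or additional clients are still to come, keeping the total state size at $n^{O(t)}$ while guaranteeing each cluster is identified and closed unambiguously. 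Combined with polynomial-time transitions, the algorithm runs in $n^{O(t)}$, placing the problem in XP.
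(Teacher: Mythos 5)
There is a genuine gap in the concrete DP you define. Your state at a bag is a partition of $X_b$ into classes, one class per cluster that ``still has a representative in the bag,'' with a guessed pair $(c,r)$ per class. But the clusters that remain relevant at the boundary $X_b$ are all clusters whose ball meets both $V_b\setminus X_b$ and $V\setminus V_b$ (centers already processed that must still cover future clients, and future centers that must cover already-forgotten clients). Your connectivity observation correctly shows each such ball meets $X_b$, but it does not bound their number by $|X_b|$: many balls can intersect the bag in the same single vertex (or in heavily overlapping sets), so a \emph{partition} of $X_b$ can represent at most $t+1$ of them. Since there can be up to $k$ such spanning clusters, you must either track them individually (state space $n^{\Theta(k)}$, not $n^{O(t)}$) or drop some of them, in which case the DP either forgets that already-paid balls cover future clients (overestimating cost), or pays for a cluster again later, or fails the ``pay exactly once'' bookkeeping you flag at forget and join nodes. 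So the algorithm as specified in your first four paragraphs is not correct, and the final paragraph, which is where the actual difficulty lives, only gestures at a fix without defining the enriched state, its transitions, or why it suffices.

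The missing idea, which is the paper's actual route, is to abandon cluster identities altogether and keep only a per-bag-vertex summary of how coverage crosses the separator: for each $v\in X_b$ a direction ($\uparrow$ for outgoing, $\downarrow$ for incoming) together with a witness vertex $c_v$, encoding either the maximum excess $\max_f\,(r_f-d(f,v))$ that the already-built partial solution pushes past $v$, or a promise that the future solution will cover everything within distance $d(v,c_v)$ of $v$. Because every path between the processed and unprocessed parts goes through a bag vertex, these $O(n)$ choices per vertex (hence $(2n)^{O(t)}\cdot k$ states) are a sufficient interface: whether a forgotten client is covered, and whether a required excess is met, reduces to distance comparisons through bag vertices. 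The substance of the theorem is then the introduce-node and join-node correctness arguments showing that taking per-vertex maxima (and choosing the witnesses as ``border vertices'' of an optimal solution) loses nothing; your proposal does not contain these arguments, and without them the claimed $n^{O(t)}$ bound does not follow.
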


In other words, we show that, for any $\ell\in\mathbb{N}$, there exists a polynomial-time algorithm for MSRDC in shortest-path metrics on graphs with treewidth at most $\ell$.

Here, we consider the arguably most general version of MSRDC in which there is a set of \emph{clients} $C$ that have to be covered with clusters centered at \emph{facilities} $F$, where $C$ and $F$ may intersect arbitrarily.

Our algorithm extends a dynamic program (DP) for tree metrics, which we sketch in the following. For simplicity, at the (unproblematic) cost of possibly turning the metric into a pseudometric  by splitting vertices into two vertices of mutual distance $0$, assume that in the underlying tree $T$ rooted at $r$, each node has at most two children and each node is either a client or a facility.

Consider some node $v\in V(T)\setminus\{r\}$ and some solution $\mathcal{S}$ to the problem. Let $\mathcal{S}'$ be the subsolution of $\mathcal{S}$ that is obtained from restricting $\mathcal{S}$ to the subtree $T'$ of $T$ rooted at $v$. The main observation is the following. Clearly, the number of clusters that $\mathcal{S}'$ uses and its cost is relevant to the remainder of the solution. Beyond that, there are two cases:
\begin{compactitem}
    \item If $\mathcal{S}'$ covers all of $T'$, the only other relevant property of $\mathcal{S}'$ is how large an area $\mathcal{S}'$ covers beyond $v$, i.e., what the largest value of $r_i-d(f_i,v)$ is for a facility $f_i$ in $\mathcal{S}'$ with corresponding radius $r_i$. We call the latter value the \emph{outgoing excess value}.
    \item If $\mathcal{S}'$ does not cover all of $T'$, the only other relevant property of $\mathcal{S}'$ is how large an area the remaining solution must cover within $T'$, again measured from $v$, i.e., what the largest value of $d(v,v')$ is for a $v'\in V(T')$ not covered by $\mathcal{S}$. We call the latter value the \emph{incoming excess value}.
\end{compactitem}
Motivated by this observation, we maintain a DP entry for each combination of $v\in V(T)$, $k'\in\{0,\dots,k\}$, and the different above cases along with all $O(n^2)$  possible incoming/outgoing excess values (all pairwise distances in $T$ suffice) that the solution needs to guarantee.  Note that one cannot simply encode the case in the sign of the excess value (at least not without distinguishing between $-0$ and $+0$) because the case with incoming excess $0$ is different from the case with outgoing excess $0$. What we then save in such an entry is the cheapest solution for the subtree rooted at $v$ using at most $k'$ facilities (if such a solution exists) and respecting the case as well as incoming/outgoing excess value.

The DP can be filled in a bottom-up manner. Leaf nodes, in case they are clients, can be covered at cost $0$ with any incoming excess, and, in case they are facilities, they can have any outgoing excess $c$ at cost $g(c)$ using one facility. For a client with a single child, we only have to translate the incoming/outgoing excess values of the child's DP entries into incoming/outgoing excess values of the new client to obtain the new DP entries. For a facility with a single child, we have to guess whether the facility will be opened and, if so, by how much and then combine with the child's DP entry that will guarantee the desired incoming/outgoing excess (with possibly one facility fewer). The most interesting case is the case of two children, in which we additionally have to guess the (feasible) combination of DP entries that we are using for the children. Note that here it comes in handy that we restricted to at most two children.

In Section~\ref{sec:main}, we extend this approach to bounded-treewidth graphs and formalize it in this general case. We make use of nice tree decompositions\footnote{We refer readers not familiar with the concept to Section~\ref{sec:prelim}.}. We maintain a DP entry for each combination of a bag of the tree decomposition, possible incoming/outgoing excess values for \emph{each} of the vertices in the bag, and the number of facilities that may be used in the corresponding solution. The incoming/outgoing excess values are encoded by a vector of vertices that define the incoming/outgoing excess through their distance to their corresponding vertex in the bag. This is sufficient since only these distances will be relevant as incoming/outgoing excess values. 

In Section~\ref{sec:hardness}, we complement this positive result with the following hardness result, underpinning even more that MSRDC is difficult.

\begin{theorem}\label{hardness}
The min-sum-radii problem is NP-hard even when $|F|=k$.
\end{theorem}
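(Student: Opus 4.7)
The plan is to give a polynomial-time Karp reduction from Exact Cover by $3$-Sets (X3C) to min-sum-radii, designed so that the produced instance satisfies $|F|=k$ by construction. Given an X3C instance $(U,\mathcal{S})$ with $|U|=3m$ and $\mathcal{S}=\{S_1,\dots,S_n\}$, I would take $F=\{f_1,\dots,f_n\}$ (one facility per set), $C=U$, and $k=n$, so $|F|=k$ automatically. The metric will be the shortest-path metric of an edge-weighted auxiliary graph whose ``covering'' edges are weight-$1$ links between $f_j$ and each $u\in S_j$, so that a ball of radius $1$ around $f_j$ covers exactly $S_j$. I would augment the graph with per-facility gadgets---such as private pendant clients together with heavy auxiliary edges---so that the only ``economical'' radii for each $f_j$ are $0$ (facility unused) and $1$ (covers exactly $S_j$).

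Under this construction, the forward direction is immediate: an exact cover of size $m$ yields cost $m$ by opening its $m$ facilities at radius $1$ and setting the remaining $n-m$ radii to $0$. For the backward direction, I would argue that any feasible solution with total cost $\le m$ must correspond to an exact cover. If all radii lie in $\{0,1\}$, then the radius-$1$ facilities select at most $m$ sets of size $3$ whose union must contain all $3m$ elements of $U$, forcing an exact cover; if instead some radius exceeds $1$, the gadgets make the total cost strictly exceed $m$. This gives the equivalence ``X3C is a yes-instance iff the min-sum-radii optimum is at most $m$,'' which completes the reduction.

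The main technical hurdle is enforcing \emph{soundness}: preventing a few facilities opened at large radii from cheaply covering many clients. In a plain bipartite shortest-path metric, a single facility opened at radius $3$ already reaches clients in other sets via length-$3$ detours $f_j\to u_0\to f_{j'}\to u$, which could undercut the $m$-bound when $m$ is large. The gadgets and edge weights must be chosen so that the marginal cost of any radius exceeding $1$ strictly exceeds the marginal coverage benefit it buys; tuning these parameters is the technical crux of the argument. An alternative route, following the same template, would be to reduce from $3$-Dimensional Matching or from Vertex Cover in bounded-degree graphs using analogous gadgetry.
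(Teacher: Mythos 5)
Your write-up is a plan rather than a proof: the entire soundness direction rests on gadgets (``private pendant clients together with heavy auxiliary edges'') whose existence and parameters you explicitly leave open, and you yourself identify this as ``the technical crux.'' That crux is exactly where the difficulty of the theorem lies, and it does not resolve itself. In the plain bipartite construction with unit covering edges, shortest-path detours of the form $f_j\to u\to f_{j'}\to u'$ make large radii extremely cost-effective: restricting X3C so that it stays NP-hard (each element in at most three sets) the incidence graph has bounded degree, so a handful of balls of radius $O(\log m)$ already covers all $3m$ elements at cost far below the budget $m$; without the degree restriction a single radius-$3$ ball may cover everything. So the equivalence ``yes-instance iff optimum $\le m$'' is simply false for the base construction, and the burden of the proof is to exhibit weights/gadgets for which cheating provably costs more than the budget --- which you have not done. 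Note also that pendant clients private to each facility tend to force \emph{every} facility to pay a positive radius, which then forces you to re-scale the budget and redo the accounting; it is not clear this can be balanced while keeping the radius-$1$/radius-$0$ dichotomy you rely on.

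For comparison, the paper's proof (a reduction from 3-SAT rather than X3C) solves precisely this problem with a concrete mechanism: the two literal facilities of variable $x_i$ share a vertex $y_i$, all edges at variable $i$ have weight $2^{i-1}$, and the budget is $2^n-1=\sum_i 2^{i-1}$. The geometric scaling yields a largest-index argument showing that covering any $y_i$ ``from outside'' (or opening both literals of some variable) already costs at least $2\cdot 2^{i-1}+\sum_{j\ge i}2^{j}\ge 2^n$, exceeding the budget; this is the quantitative barrier against exactly the detour-cheating your sketch defers. Until you specify gadgets and weights with an analogous cost accounting (or switch to such a weighting scheme), the backward direction of your reduction is unproven, so the argument as it stands has a genuine gap.
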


That is, the hardness of the min-sum-radii problem is not merely based on the fact that a set of facilities to be opened has to be selected. We call this problem (with $|F| = k$) MSR-A, which is the assignment version of the min-sum-radii problem. Our proof is inspired by the proof of Gibson et al.~\cite{DBLP:journals/algorithmica/GibsonKKPV10} showing that the (regular) min-sum-radii problem is NP-hard on weighted planar graph metrics and on metrics of bounded doubling dimension.

\subsection{Further Related Work}

For the min-sum-radii objective and $F=C$, there are a $3.504$-approximation algorithm~\cite{DBLP:journals/jcss/CharikarP04} as well as some exact algorithms known: a randomized algorithm with running time $n^{O(\log n \cdot \log \Delta)}$~\cite{DBLP:journals/algorithmica/GibsonKKPV10}, where $\Delta$ is the ratio of the maximum and minimum distance in the given metric, and a polynomial-time algorithm for Euclidean metrics of fixed dimension \cite{DBLP:journals/siamcomp/GibsonKKPV12}, assuming that two candidate solutions can be compared efficiently.

Both the aforementioned results also apply to the corresponding assignment problem, i.e., the version with $|F|=k$ rather than $F=C$. When $g:x\mapsto x^\alpha$ for some $\alpha > 1$, this problem is NP-hard even in the Euclidean plane~\cite{DBLP:conf/esa/BiloCKK05,DBLP:conf/compgeom/AltABEFKLMW06}, and there is a PTAS in Euclidean space of any fixed dimension~\cite{DBLP:journals/cn/Lev-TovP05,DBLP:conf/esa/BiloCKK05}.

Extending our set of objective functions, minimizing $\mathcal{F}(r_1,\dots,r_k)$ has been considered for some arbitrary (polynomial-time computable) symmetric and monotone function $\mathcal{F}:\mathbb{R}^n_+\rightarrow\mathbb{R}_+$, again assuming $F=C$. In the Euclidean plane with $k$ fixed, the problem can be solved exactly in polynomial time~\cite{DBLP:journals/jal/CapoyleasRW91}.

In the min-sum-diameter problem (e.g.~\cite{hansen1987minimum}), there are no cluster centers; the goal is merely to compute a partition $C_1,\dots,C_k$ of $C$ so as to minimize the sum of cluster \emph{diameters}, i.e., maximum inter-client distance within the corresponding cluster. Inapproximability~\cite{DBLP:conf/swat/DoddiMRTW00} and approximation results~\cite{DBLP:conf/swat/DoddiMRTW00,DBLP:conf/swat/BehsazS12} have been obtained.

We discuss two problems for which dynamic programs related to ours have been employed but to obtain different results. In both problems, $F=C$. The first problem is $k$-center with an upper bound on the radius of any cluster (another special case of our setting). When parameterized with clique-width (implying the same result when paramterized with treewidth), the version on graph metrics is in XP and there is an approximation scheme running in FPT time~\cite{DBLP:journals/dam/KatsikarelisLP22}. Furthermore, the min-sum-radii problem with the radii restricted to a given multiset is in XP when parameterized with the number of distinct radii~\cite{DBLP:conf/waoa/LieskovskyS22}. 

The parameter treewidth was independently introduced by several authors, e.g., by Robertson and Seymour~\cite{DBLP:journals/jal/RobertsonS86}, and shown to be NP-complete to compute~\cite{arnborg1987complexity}. Computing a tree decomposition whose width equals the treewidth of the corresponding graph is fixed-parameter tractable when parameterized by this treewidth~\cite{DBLP:journals/siamcomp/Bodlaender96}.

\section{Preliminaries}\label{sec:prelim}
In the MSRDC problem, one is given a metric space $(V=C\cup F,d)$ where $F$ are the facilities and $C$ are the clients, along with $k\in\mathbb{N}$ and oracle access to an increasing function $g:\RR_{\geq 0} \to \RR_{\geq 0}$. 
We let $B(f,r) = \{c\in C: d(c,f)\leq r\}$ for $f\in F$ and a radius $r\in \RR_{\geq 0}$.  
A feasible solution $(S,R)$ consists of up to $k$ facilities $S= (f_1,\ldots,f_k)$ and $k$ radii $R=(r_1,\ldots,r_k)$ such that $C = \bigcup_{i=1}^kB(f_i,r_i)$. Its cost is $\sum_{i=1}^k g(r_i)$.

Since $g$ is increasing, we may restrict to solutions $(S,R)$ such that for each $r$ in $R$ we have $r\in \mathcal{R}:=\{d(u,v) \mid u,v\in V(G)\}$. The reason is that otherwise the radii of the solution could be decreased until they satisfy this condition, without increasing the cost of the solution.  We therefore only consider clusterings in each subinstance where the radii are chosen from $\mathcal{R}$ to reconstruct an optimal solution.

In this paper, we are focussing on the case where $(V,d)$ is the shortest-path metric of an edge-weighted graph $G=(V,E)$ with weight function $w:E\rightarrow \mathbb{R}_{\geq 0}$. 
Note that, strictly speaking, we are faced with a pseudometric rather than a metric because we allow $0$ as an edge weight (which is meaningful, as this way we can model that a point can act as a client and facility).
In the following, we assume $G$ to be a connected graph since otherwise we can use our algorithm on each connected component separately.

A \emph{tree decomposition} of $G$ is a tree $T=(V',E')$ each of whose nodes $t\in V'$ is associated with a \emph{bag} $X_t\subseteq V$ such that the following conditions are fulfilled:
\begin{compactitem}
    \item Each vertex $v\in V$ is contained in at least one bag,
    \item for each edge $e=\{u,v\}\in E$, there exists at least one bag such that both $u$ and $v$ are contained in it, and
    \item for each $v\in V$, the bags containing $v$ induce a connected subtree of $T$.
\end{compactitem}
In this paper, we assume our tree decompositions to be \emph{nice} (e.g.,~\cite{DBLP:books/sp/CyganFKLMPPS15}). For that to be the case, $T$ has to be a binary tree with root $r$ such that $X_r=\emptyset$. Furthermore, for each $t\in V'$: 
\begin{compactitem}
    \item $t$ is a leaf and $X_t=\emptyset$ ($t$ is a \emph{leaf node}).
    \item $t$ has only a single child $t'\in V'$, and $X_{t} = X_{t'} \cup \{v\}$ for some $v\in V\setminus X_{t}$ ($t$ is an \emph{introduce node}).
    \item $t$ has only a single child $t'\in V'$, and $X_t = X_{t'}\setminus  \{v\}$ for some $v\in X_{t'}$ ($t$ is a \emph{forget node}).
    \item $t$ has two children $t_1,t_2\in V'$ and $X_t = X_{t_1} = X_{t_2}$ ($t$ is a \emph{join node}).
\end{compactitem}

The \emph{width} of such a tree decomposition is $\max_{t\in V(T)}\vert X_t \vert -1$. Note that, for every tree decomposition, there exists a nice tree decomposition of the same graph with the same width, and it can be computed in polynomial time~\cite{DBLP:books/sp/CyganFKLMPPS15}. The \emph{treewidth} of $G$ is the minimal width of any (nice) tree decomposition of $G$.

For each node $t\in V'$ we define $V_t$ to contain all vertices in $X_t$ or in $X_{t'}$ for some descendant of $t$.
Likewise we define $C_t=C\cap V_t$, $F_t=F\cap V_t$ and $G_t=(V_t,E_t)$, where $E_t$ only contains edges from $E(G)$, where both endpoints are in $V_t$. Finally, for brevity we write $[k]:= \{1,\ldots,k\}$ for $k\in\mathbb{N}$ and denote the null tuple as $()$.

\section{A polynomial-time algorithm for bounded-treewidth graphs}
\label{sec:main}

    In this section, we show Theorem~\ref{thm:main}. Our algorithm is based on a dynamic program (DP). In the first subsection, we describe what the DP entries are supposed to contain, and in the second subsection we describe how to compute them and prove the correctness.

\subsection{Entries of the Dynamic Program}

There is a DP entry for each combination of the following values:
\begin{compactitem}   
    \item A bag  $t\in V(T)$ with $X_t = \{v_1,\dots,v_{|X_t|}\}$,
    \item a vector of vertices $c \in (F \cup C)^{X_t}$.
    \item a vector of directions $\circ  \in\{\uparrow, \downarrow\}^{ X_t }$, and
    \item a number of facilities $k'\in\{0,\dots,k\}$.
\end{compactitem}

Informally, what we store in an entry $D[t,c,\circ, k']$ is the cheapest solution for $G_t$ using at most $k'$ facilities and obeying the \emph{excess requirements} given by $\circ$ and the distances $d(v,c_v)$ for $v \in X_t$. For a given tuple $(t,c,\circ,k')$ we  require a specific set of vertices to be covered, using only facilities from $F_t$:
$$C^{c, \circ,\text{rem}}_t:=C_t\setminus\bigcup_{v:\circ_v=\downarrow}B(v,d(v,c_v)).$$
Therefore we say that a solution $(S,R)$ is \emph{covering for} $(t,c,\circ,k')$ if $C^{c, \circ,\text{rem}}_t \subseteq \bigcup_{f\in S}B(f,r_f)$.
Let $\SSS = (S,R)$ with $S=(f_1,...,f_{\ell})\in F_t^{\ell}$ for $\ell\leq k'$ and radii $R\in \mathcal{R}^{S}$ be a solution for $(t,c,\circ, k')$. 
Then the \emph{excess coverage} of $v \in X_t$  w.r.t. $\SSS$ and $(t,c,\circ, k')$ is 
\begin{align*}
&e_v^{t,c,\circ}({\SSS}):= \\
&\max \big\{\max_{f\in S} ( r_f - d(f, v) ), \max_{\substack{w \in X_t\\ 
\circ_w = \da
}} ( d(v,c_w) - d(w, v) )\big\}.
\end{align*}

We call a covering solution \emph{feasible for} $(t,c,\circ,k')$ if for each  $v \in X_t$ the excess requirements are satisfied, i.e., $e_{v}^{t,c,\circ}({\SSS}) \geq d(v,c_v)$ for every $v\in X_t$ (Note that vertices with incoming excess satisfy this condition by themselves). We may call a solution simply feasible (and omit $(t,c,\circ,k')$) if it is clear from the context.

The \emph{border vertex} of $v \in X_t$ w.r.t.\ a covering solution $\SSS$ for $(t,c,\circ, k')$ is the furthest-away vertex $q\in C \cup F$ from $v$ within distance $e_v^{t,c,\circ}(\SSS)$, i.e., 
\begin{align*}
&b_v^{t,c,\circ}({\SSS})\in
 \arg\max_{\substack{q \in C\cup F\\ 
d(v,q) \leq e_v^{t,c,\circ}({\SSS})
}} d(v,q).
\end{align*}
At last, for every $v\in V_t$ we define $F_v^{t,c, \circ}(\mathcal{S}) := \{f \in S \mid r_f-d(f,v)=e_v^{t,c, \circ}(\SSS)\}$ and $C_v^{t,c, \circ}(\mathcal{S}) := \{q\in X_t \mid \circ_v=\da \wedge d(q,c_q)-d(q,v)=e_v^{t,c, \circ}(\SSS)\}$ as the set of vertices which provide the excess coverage in vertex $v \in V_t$ by some radius or incoming excess, respectively. 
Notice that for every covering solution $\mathcal{S}$ and $v\in C_t \cup X_t$ we have $\vert F_v^{t,c, \circ}(\mathcal{S}) \cup C_v^{t,c, \circ}(\mathcal{S}) \vert \geq 1$.  

Now the entry $D[t,c,\circ, k']$ contains  a feasible solution $(S,R) \in F_t^\ell \times \mathcal{R}^\ell$ for $\ell \leq k'$, minimizing $\sum_{i = 1}^\ell g(r_i)$, if such a solution exists. As a shorthand, we refer to the cost of  this solution by $D^{\text{val}}[t, c, \circ, k']$.
If no feasible solution exists, we have $D[t,c,\circ, k'] = \text{NIL}$  and $D^{\text{val}}[t, c, \circ, k']=\infty$.
Note that, with this definition, the entry $D[r,(),(),k]$  contains an optimal solution for the original instance as $r$ is the root node of the tree decomposition. We will use $\mathcal{S}^\emptyset := ((),())$ as abbreviation for the empty solution.

Given two tuples $(t_1,c',\circ',k_1)$ and $(t_2,c'',\circ'',k_2)$  with solutions $(S_1,R_1) = D[t_1,c',\circ',k_1]$  and  $(S_2,R_2) = D[t_2,c'',\circ'',k_2]$ respectively. If none of $(S_1,R_1)$ and $(S_2,R_2)$ are NIL, their \emph{combined solution} is obtained by concatenating the lists of facilities and the lists of radii, otherwise it is NIL.

\subsection{Computing the Entries}\label{sec:entries_lemmas}
In the following, we show how to compute the correct entries from the bottom to the top of the tree $T$. We distinguish the four different node types.

\subsubsection{Leaf nodes} Recall that, for each leaf node $t$, we have $V_t= \emptyset$. Therefore $\Cp = \emptyset$, and the optimal solution has cost $0$ for any $k'\geq 0$. Therefore it is correct to set $D[t, (), (), k'] = \mathcal{S}^\emptyset$. 

\subsubsection{Introduce nodes} 

Consider any entry $D[t,c,\circ,k']$ such that $t$ is an introduce node with child~$t'$ and $X_{t} = X_{t'} \cup \{v\}$. We consider every solution $D[t', c', \circ', k']$ for $ c'\in V^{ X_{t'}  }$, $\circ'\in \{\ua, \da\}^{ X_{t'}}$ and check if it is  feasible for $(t,c,\circ, k')$. 
If $k'\geq 1$ and $v\in F$ we also consider solutions where $v$ is chosen as a facility with radius $r_v\in \mathcal{R}_v:=\{d(v,q)\mid q\in V\}$, combine it with $D[t', c', \circ', k'-1]$ and check if it is feasible. Among all such feasible solutions, we select one with minimum cost.

\begin{restatable}{lem}{IntroduceNodesLemma}\label{lem:introduce_node}
Given a tuple $(t,c,\circ,k')$ where $t$ is an introduce node and assume all entries of its descendant are correct. Then $D[t, c, \circ, k']$ is correct.
\end{restatable}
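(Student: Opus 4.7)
The plan is to prove the lemma by a two-way inequality on $D^{\text{val}}[t,c,\circ,k']$. The \textbf{soundness direction} is straightforward: every candidate produced by the algorithm is the combined solution of some child entry $D[t',c',\circ',k']$ or $D[t',c',\circ',k'-1]$ (in the latter case, augmented by $v$ as a facility with some radius $r_v \in \mathcal{R}_v$), and is explicitly verified to be feasible for $(t,c,\circ,k')$. Hence every retained candidate's cost upper-bounds $D^{\text{val}}[t,c,\circ,k']$. A preliminary observation is that the only edges in $E_t \setminus E_{t'}$ are those from $v$ to vertices of $X_{t'}$ (by the edge-containment property of tree decompositions), so all distances needed for the feasibility check are well-defined at the parent level.

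For the \textbf{completeness direction}, let $\SSS^* = (S^*, R^*)$ be an optimum feasible solution for $(t,c,\circ,k')$. I would split on whether $v \in S^*$. If $v \notin S^*$, then $S^* \subseteq F_{t'}$, so $\SSS^*$ is a solution on $G_{t'}$. I construct $(c',\circ')$ by setting, for each $u \in X_{t'}$: $\circ'_u := \circ_u$, and $c'_u$ chosen as a border vertex of $u$ with respect to $\SSS^*$ restricted to those excess sources already present in the subtree rooted at $t'$ (i.e., discarding the contribution from $v$'s incoming excess if $\circ_v = \da$). The verification reduces to three things: $C^{c',\circ',\text{rem}}_{t'} \subseteq \bigcup_{f \in S^*} B(f, r_f)$, which follows since the clients the child must cover are a subset of those the parent must cover intersected with $C_{t'}$; $e_u^{t',c',\circ'}(\SSS^*) \geq d(u, c'_u)$ for each $u \in X_{t'}$, by the choice of $c'_u$; and the facility count is at most $k'$. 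By induction, $D^{\text{val}}[t',c',\circ',k'] \leq \cost(\SSS^*)$, and since the algorithm enumerates this $(c',\circ')$ and checks feasibility, it retains a candidate of matching cost.

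If instead $v \in S^*$ at radius $r_v$, we may assume $r_v \in \mathcal{R}$ (and thus $r_v \in \mathcal{R}_v$), since a min-cost solution can always be taken to have radii in $\mathcal{R}$. Setting $\SSS' := \SSS^* \setminus \{(v, r_v)\}$ and constructing $(c',\circ')$ analogously, but now also subtracting $v$'s facility contribution from each bag vertex's excess account, yields a feasible solution for $(t',c',\circ',k'-1)$ of cost $\cost(\SSS^*) - g(r_v)$. The algorithm's candidate at this $(c',\circ',r_v)$ therefore attains cost at most $\cost(\SSS^*)$.

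The \textbf{main obstacle} is the explicit construction of the witness $(c',\circ')$: one must simultaneously (i) match $C^{c',\circ',\text{rem}}_{t'}$ to what the restricted solution actually covers in $G_{t'}$, and (ii) encode the excess support that each bag vertex derives from $\SSS'$ in the form $d(u,c'_u)$ with $\circ'_u \in \{\ua,\da\}$. The tree-decomposition property is crucial here: any coupling between $v$ and the interior of $V_{t'}$ passes through $X_{t'}$, so excess information at $X_{t'}$ is a sufficient interface. Working through the case distinctions on each $u$'s $\circ_u$ value and on whether its excess is supplied by a facility in $S' \cap F_{t'}$ or by another bag vertex's incoming excess constitutes the most technical portion of the proof.
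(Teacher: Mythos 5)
Your overall two-sided plan (soundness by explicit feasibility checks, completeness by guessing a witness child tuple) matches the paper, but the completeness argument has a genuine gap, and the witness construction itself is flawed. The algorithm stores in each child entry one particular minimum-cost feasible solution and checks \emph{that stored solution} for feasibility at $(t,c,\circ,k')$. So it is not enough to exhibit a child tuple for which $\SSS^*$ (or $\SSS^*$ minus $v$) is feasible --- your three verification items establish only this, giving $D^{\text{val}}[t',c',\circ',k']\le \cost(\SSS^*)$. You must additionally show that \emph{any} feasible solution for the guessed child tuple, in particular the one stored in $D[t',c',\circ',\cdot]$ (which may look nothing like $\SSS^*$), passes the parent's feasibility check --- most delicately the excess requirement at the introduced vertex $v$ when $\circ_v=\ua$ and $v\notin S^*$. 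That step is the technical core of the paper's proof (a separator argument: the excess reaching $v$ must pass through some $q\in X_{t'}$, handled via the two cases $\circ'_q=\da$ and $\circ'_q=\ua$); you flag it as the ``main obstacle'' but do not carry it out.

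Moreover, your choice $\circ'_u:=\circ_u$ together with $c'_u$ a border vertex makes that missing step actually fail. Suppose $\circ_u=\da$ but in $\SSS^*$ the excess coverage at $u$ is attained only by a facility and exceeds $d(u,c_u)$. Your child tuple then grants $u$ an incoming excess $d(u,c'_u)>d(u,c_u)$ that the parent instance does not guarantee: clients in the annulus between the two radii are removed from $C^{c',\circ',\text{rem}}_{t'}$ yet still lie in $C^{c,\circ,\text{rem}}_{t}$, so the cheapest child solution may legitimately leave them uncovered (and provide no outgoing excess at $u$), and the parent check rejects it --- leaving no guaranteed surviving candidate of cost at most $\cost(\SSS^*)$. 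Relatedly, your containment $C^{c',\circ',\text{rem}}_{t'}\subseteq C^{c,\circ,\text{rem}}_{t}\cap C_{t'}$ points the wrong way; what the argument needs is $C^{c,\circ,\text{rem}}_{t}\setminus\{v\}\subseteq C^{c',\circ',\text{rem}}_{t'}$, so that the child is forced to cover everything the parent requires. The paper achieves both properties precisely by \emph{not} inheriting $\circ_u$: it sets $c'_u=b_u^{t,c,\circ}(\SSS^*)$ and puts $\circ'_u=\ua$ exactly when no incoming excess attains the excess coverage at $u$ (i.e., $\vert C_u^{t,c,\circ}(\SSS^*)\vert=0$), so every $\da$ at the child is backed by a parent-guaranteed incoming excess that propagates, while every $\ua$ forces the stored child solution to reproduce the facility-provided excess on which the coverage of the annuli and of $v$'s requirement relies. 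Repairing your proof essentially means adopting this provenance-based direction rule and then proving the ``any feasible child solution is feasible for the parent'' implication.
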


\begin{proof}
Since we only consider feasible solutions, the optimal solution is at least as good as the one retrieved by our algorithm.

Next we need to show that the solution computed by our algorithm is at least as good as the optimal solution. If no feasible solution exists, then $D^\text{val}[t, c, \circ, k'] = \infty$ is clearly correct. So assume that there exists at least one solution which is feasible.
Let $\SSS^* = (S^*, R^*)$ with $S^*=(f_1^*,...,f_\ell^*) \in F_t^\ell$ and $R^*\in \mathcal{R}^{S^*}$ for $\ell\leq k'$ be an optimal solution for $(t,c,\circ,k')$.

First consider the case $v \notin S^*$, so we have $S^* \subseteq F_{t'}$. 
For every vertex $w\in X_{t'}$ we set $c_{w}' = b_{w}^{t,c,\circ}(\mathcal{S}^*)$; if $\vert C_w^{t,c, \circ}(\mathcal{S}^*) \vert = 0$ (implying $\vert F_w^{t,c, \circ}(\mathcal{S}^*) \vert \geq 1$), we set  $\circ_{w}'=\ua$, and otherwise we set $\circ_{w}'=\da$.
 Let $\mathcal{S} = (S, R) = D[t', c', \circ', k']$ be the solution computed by our algorithm. Note that $\mathcal{S}$ is not NIL because the solution $\mathcal{S}^*$ is also feasible for $(t', c', \circ', k')$, so at least one feasible solution exists.
We first argue that every vertex $w\in \Cp  \setminus \{v\}$
is also contained in $\Cp[t'][c',\circ']$ and therefore covered in the instance corresponding to $(t,c,\circ,k')$ by $\mathcal{S}$: 
Let $q\in C_{t'} \setminus \Cp[t'][c',\circ']$ be arbitrary. 
This vertex is already covered by some vertex $w\in X_{t'}$ with
$d(w,c'_w)\geq d(w,q)$.
We  show that $q$ is also already covered in $(t,c,\circ,k')$ as follows: 
Since $w$ has some incoming excess
$d(w,c'_w)$ in $(t',c',\circ',k')$, we know that there exists some vertex
$z\in X_{t}$ providing this excess ($w$ could be $z$ itself).
Therefore we have $d(z,c_z) \geq d(z,w) + d(w,c'_w)  \geq d(z,w) + d(w,q) \geq d(z,q)$, so the vertex $q$ must also be covered in $(t,c,\circ,k')$. 
This means that $\Cp \setminus \{v\}\subseteq \Cp[t'][c',\circ']$. 

Additionally, to be accepted as a feasible solution, we need $e_w^{t,c,\circ}(\mathcal{S})\geq d(w,c_w)$
for every vertex $w\in X_t$.
For every vertex $w\in X_{t'}$, this follows from the entries of $c'$ being set to the border vertices 
of the optimal solution.
If $\circ_v = \da$, we also know that $v$ receives the excess, so at last, assume $\circ_v = \ua$.
Since $v$ is not in $S^*$,
there either exists some facility $f^*$ in $S^*$ with $f^* \in F_v^{t,c,\circ}$ or vertex $v^* \in C_v^{t,c,\circ}$.
If $\vert C_v^{t,c,\circ} \vert \geq 1$ any solution (and therefore also the solution computed by $D[t',c',\circ', k']$) will fulfill the excess requirements in $v$, since the excess coverage in $v$ can only increase but never decrease.
So in the following assume that $\vert C_v^{t,c,\circ} \vert = 0$ and the excess coverage is contributed only by a facility $f^*$. This means $e_v^{t,c,\circ}(\mathcal{S}^*) = r_{f^*} - d(f^*,q) - d(q,v)$
for some $q\in X_{t'}$, since  any path connecting $f^*$ with $v$ has to pass through at least one vertex in $X_{t'}$ by the definition of a tree decomposition.
If $\circ_q'=\da$, we can argue analogously to the previous case above that there exists some $z\in X_{t'}$ with incoming excess which provides the excess coverage to $v$.
If $\circ_q'=\ua$, there exists some facility $f'\in S$ with $r_{f'}\geq d(f',q) + d(q,b_q^{t,c,\circ}(\mathcal{S}^*))$ since at least one facility is required to cover $q$ with at least $d(q,b_q^{t,c,\circ}(\mathcal{S}^*)) \geq d(q,v) + d(v,c_v)$.
So we know that the excess coverage $e_q^{t,c,\circ}(\mathcal{S})$ also fulfills the excess requirements of $v$. 
We can thus conclude that $D^\text{val}[t,c,\circ, k']$ is at most $D^\text{val}[t',c',\circ', k']$, which is equal to the cost of $\mathcal{S}^*$.

Next, assume $v \in S^*$. Let $r_v\in \mathcal{R}_v$ be its radius and assume we guess this value in our algorithm. 
Let $\SSS^*_{v}$
be the solution $\SSS^*$ after removing facility $v$. 
For a vertex $p \in X_t$, we compare the excess coverage $e_p^{t,c,\circ}(\SSS^*)$ to $e_p^{t, c, \circ}(\SSS^*_{v})$ and construct $c'$ and $\circ'$ for $t'$ depending on both the old excess coverage and the new one.
Since removing a facility from the solution can only decrease the excess coverage in any vertex $p\in X_t$, we have $e_p^{t, c, \circ}(\SSS^*_{v}) \leq e_p^{t, c, \circ}(\SSS^*)$. 
For every $p \in X_t$ we set $c_{p}'= b_{p}^{t,c,\circ}(\SSS^*)$. If $e_{p}^{t, c, \circ}(\SSS^*) = e_{p}^{t,c,\circ}(\SSS^*_{v})$ and if $\vert C_{p}^{t,c,\circ}(\mathcal{S}^*) \vert = 0$ (implying $\vert F_{p}^{t,c, \circ}(\mathcal{S}^*) \vert \geq 1$), we set $\circ_{p}' = \ua$, otherwise  $\circ_{p}' =\da$. Notice that $\SSS^*_{v}$ is a feasible solution for our constructed instance $D[t', c', \circ', k'-1]$ since, for every vertex $p\in X_{t'}$ which is not covered by $v$ (and therefore the real excess did not change), we have $c_p' = b_{p}^{t,c,\circ}(\SSS^*) $.
Again let $\SSS$ = $D[t', c', \circ', k'-1]$. By previous arguments we analogously get $\Cp\setminus B(v,r_v) \subseteq \Cp[t'][c',\circ']$. Since we guessed $r_v$, we can conclude that $\SSS$ combined with $v$ covers all vertices in $\Cp$. 

Regarding the excess requirements we distinguish two cases. First consider vertices $p \in X_t$ with $v \in F_{p}^{t,c, \circ}(\mathcal{S}^*)$, here the excess requirements are satisfied trivially by including $v$ with $r_v$. For the vertices $p \in X_t$ with $v \notin F_{p}^{t,c, \circ}(\mathcal{S}^*)$ the excess requirements are satisfied by analogous argumentation as in the previous case with $v \notin S^*$. This shows the feasibility of  $\SSS$ combined with $(v,r_v)$. 
Thus, we conclude that $D^\text{val}[t,c,\circ, k']$ is at most $D^\text{val}[t', c', \circ', k'-1] + g(r_v) \leq D^\text{val}[t, c, \circ, k']$, which is equal to the cost of $\SSS^*$.
\end{proof}

\subsubsection{Forget nodes}

We define $c' \in V^{X_{t'}}$  s.t.\ $c'_w=c_w$ for $w\in X_{t}$ and $c'_v = v$ and $\circ',\circ'' \in \{\da,\ua\}^{X_{t'}}$ s.t. $\circ'_w = \circ''_w=\circ_w$ for $w \in X_t$ and $\circ''_v = \da$, $\circ'_v = \ua$. Assume we have some node $t$ with $X_t = X_{t'}\setminus \{v\}$ for some $v\in X_{t'}$.
Since $v$ cannot be adjacent to any other vertex from $V\setminus V_t$, we only need to ensure that it is appropriately used in the subsolution from $D[t, c, \circ, k']$:
\begin{equation*}
    D[t, c, \circ, k']
    = \begin{cases}
    D[t', c', \circ', k'] & v\in \Cp, \\
    D[t', c', \circ'', k'] & \text{otherwise}.
    \end{cases}
\end{equation*}

\begin{restatable}{lem}{forgetNodesLemma}
Let $(t, c, \circ , k')$ be a tuple of parameters where $t$ is a forget node and assume all entries for its descendant are correct. Then $D[t, c, \circ, k']$ is correct.
\end{restatable}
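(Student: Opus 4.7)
The plan is to verify the recurrence by establishing, in each of its two cases, a cost-preserving equivalence between feasible solutions for $(t, c, \circ, k')$ and feasible solutions for the chosen child subproblem. Since the recurrence passes the facility--radius list through verbatim, costs agree automatically, so the argument reduces to showing that the covering constraint and the excess constraints transfer between the two sides. I first record the structural fact $V_t = V_{t'}$ (since $v \in X_{t'} \subseteq V_{t'} \subseteq V_t$), so $F_t = F_{t'}$ and $C_t = C_{t'}$: the two subproblems draw from the same pool of facilities and clients.

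In Case A, i.e., $v \in C_t^{c,\circ,\text{rem}}$, the recurrence uses $(c',\circ')$ with $\circ'_v = \uparrow$ and $c'_v = v$. The observation I would use is that $\circ'_v = \uparrow$ keeps the downward-vertex set of $X_{t'}$ equal to that of $X_t$, so $C_{t'}^{c',\circ',\text{rem}} = C_t^{c,\circ,\text{rem}}$ and, for each $w \in X_t$, $e_w^{t',c',\circ'}(\mathcal{S}) = e_w^{t,c,\circ}(\mathcal{S})$ with the same target $d(w,c'_w) = d(w,c_w)$. The one extra requirement in the $t'$-subproblem, $e_v^{t',c',\circ'}(\mathcal{S}) \geq d(v,c'_v) = 0$, is automatically met by the facility that has to cover $v \in C_{t'}^{c',\circ',\text{rem}}$: such a facility contributes a nonnegative term to $e_v$. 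Feasibility therefore transfers in both directions.

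In Case B, i.e., $v \notin C_t^{c,\circ,\text{rem}}$, the recurrence uses $(c',\circ'')$ with $\circ''_v = \downarrow$ and $c'_v = v$. Then $v$ contributes $B(v,0) \cap C_{t'} \subseteq \{v\}$ to the incoming cover, so $C_{t'}^{c',\circ'',\text{rem}} = C_t^{c,\circ,\text{rem}} \setminus \{v\} = C_t^{c,\circ,\text{rem}}$ by the case hypothesis. For $w \in X_t$, the extra term contributed by $v$ to $e_w^{t',c',\circ''}(\mathcal{S})$ is $d(w,c'_v) - d(v,w) = 0$, yielding $e_w^{t',c',\circ''}(\mathcal{S}) = \max\{e_w^{t,c,\circ}(\mathcal{S}),\, 0\}$; combined with $d(w,c_w) \geq 0$, this makes the $t'$- and $t$-feasibility inequalities at $w$ equivalent on the generic part of the range. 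The requirement $e_v^{t',c',\circ''}(\mathcal{S}) \geq 0$ at $v$ itself is automatic through $v$'s own downward contribution of $0$, and both directions then follow.

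The main technical subtlety I anticipate is the corner of Case B in which $d(w,c_w) = 0$ while no facility and no other downward vertex of $X_t$ reaches $w$, so that the neutral $0$-contribution of $v$ in the child could appear to satisfy the $t'$-requirement even when the $t$-requirement is not met. I would resolve this by distinguishing on $\circ_w$: if $\circ_w = \downarrow$ then $w$ itself contributes $d(w,c_w) - d(w,w) = 0$ already in $t$, so the requirement is equally (trivially) met there; and if $\circ_w = \uparrow$ with $c_w = w$, the only content of the requirement is that some source of coverage reaches $w$, and the analysis above shows that this source is present on one side iff it is present on the other. Once this corner is handled, the recurrence is exact and $D[t,c,\circ,k']$ correctly stores a minimum-cost feasible solution.
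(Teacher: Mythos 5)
Your proposal is correct and follows essentially the same route as the paper: both proofs argue, by the same case split on whether $v \in C_t^{c,\circ,\text{rem}}$, that the parent tuple and the selected child tuple have identical sets of feasible solutions (with the facility--radius lists and hence costs passed through unchanged), so the child's optimal entry is optimal for the parent. Your Case B tacitly assumes that no other client lies at distance $0$ from $v$ (so that $B(v,0)\cap C_{t'}\subseteq\{v\}$), which is precisely the assumption $d(v,w)>0$ for $w\neq v$ that the paper invokes at the same point, so the two arguments match in both substance and level of rigor (yours is, if anything, more detailed on the excess-requirement bookkeeping).
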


\begin{proof}
First assume we have $v\in \Cp$.
Notice that both instances corresponding to the tuples $(t,c,\circ, k')$ and $(t, c', \circ', k')$ are equivalent since the sets of feasible solutions are identical.
Therefore we have $D^\text{val}[t, c, \circ, k'] = D^\text{val}[t', c', \circ', k']$.

Now assume $v \notin \Cp$.  In this case we either have $v\in C$ and $v$ is already covered by some incoming excess, or we have $v \in F$. 
Therefore both instances corresponding to the tuples $(t,c,\circ, k')$ and $(t, c', \circ'', k')$ are equivalent because we have $d(v,w)>0$ for all $w\in V\setminus \{v\}$. In this case we also have $D^\text{val}[t,c, \circ, k'] = D^\text{val}[t', c',\circ'', k']$.
\end{proof}

\subsubsection{Join nodes} 

Let us consider a join node where for some node $t$ we have $X_t=X_{t_1}=X_{t_2}$ for two children nodes $t_1,t_2$.
We work in a similar fashion by iterating over all possible inputs $c'\in (F\cup C)^{X_t},\circ'\in\{\ua,\da\}^{X_t},k_1\leq k'$ and $c''\in (F\cup C)^{X_t}, \circ''\in\{\ua,\da\}^{X_t},k_2\leq k'$ with $k_1+k_2 = k'$ and select the cheapest feasible combined solution. 

\begin{restatable}{lem}{joinLemma}
Let $(t, c, \circ , k')$  be a tuple of parameters where $t$ is a join node and assume all entries for its descendants are correct. Then $D[t, c, \circ, k']$ is correct.
\end{restatable}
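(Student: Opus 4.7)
The strategy parallels Lemma~\ref{lem:introduce_node}: since the algorithm only records combined solutions that are feasible for $(t, c, \circ, k')$, its output cost is at least $D^{\text{val}}[t, c, \circ, k']$, and it therefore suffices to exhibit, for every optimal feasible solution $\SSS^* = (S^*, R^*)$, one guess the algorithm considers whose combined cost does not exceed $\cost(\SSS^*)$ (the case where no feasible solution exists is handled as in the previous lemmas).

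Partition $S^* = S_1^* \sqcup S_2^*$ with $S_i^* \subseteq F_{t_i}$, assigning facilities of $S^* \cap X_t$ arbitrarily to $t_1$, and set $k_i = |S_i^*|$, so that $k_1 + k_2 \leq k'$. For each $v \in X_t$, let $c'_v = c''_v := b_v^{t,c,\circ}(\SSS^*)$; set $\circ'_v = \ua$ precisely when $S_1^* \cap F_v^{t,c,\circ}(\SSS^*) \neq \emptyset$ and $\circ'_v = \da$ otherwise, and define $\circ''_v$ analogously using $S_2^*$ in place of $S_1^*$. Write $(c^{(i)}, \circ^{(i)})$ for the resulting tuple for child $t_i$.

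Analogously to the introduce-node argument, one checks that $\SSS_i^* := (S_i^*, R_i^*)$ (with radii inherited from $R^*$) is feasible for $(t_i, c^{(i)}, \circ^{(i)}, k_i)$: whenever $\circ^{(i)}_v = \ua$, a facility of $S_i^*$ lies in $F_v^{t,c,\circ}(\SSS^*)$ by construction, so $\max_{f \in S_i^*}(r_f - d(f,v)) = e_v^{t,c,\circ}(\SSS^*) \geq d(v, c^{(i)}_v)$, meeting the excess requirement; and a hypothetical client $u \in C_{t_i}^{c^{(i)}, \circ^{(i)}, \text{rem}}$ not covered by $\SSS_i^*$ would have to be covered in $\SSS^*$ by an external source, which (by the tree-decomposition property that any $G$-path from $V_{t_{3-i}}\setminus X_t$ or from $V \setminus V_t$ to $u$ crosses $X_t$) factors through some $w \in X_t$ with $d(w,u) \leq e_w^{t,c,\circ}(\SSS^*)$ and $S_i^*$ providing strictly less excess at $w$, forcing $\circ^{(i)}_w = \da$ and $d(w, c^{(i)}_w) = d(w, b_w^{t,c,\circ}(\SSS^*)) \geq d(w,u)$ and excluding $u$ --- a contradiction. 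By induction, $D^{\text{val}}[t_i, c^{(i)}, \circ^{(i)}, k_i] \leq \cost(\SSS_i^*)$.

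The main obstacle is to verify that the combined solution $\SSS_1 \cup \SSS_2$ of the DP entries $\SSS_i := D[t_i, c^{(i)}, \circ^{(i)}, k_i]$ --- which need not equal $\SSS_i^*$ --- is itself feasible for $(t, c, \circ, k')$, since only then does the algorithm actually record this guess. For coverage, any $u \in C_t^{c,\circ,\text{rem}} \cap C_{t_i}$ is either in $C_{t_i}^{c^{(i)}, \circ^{(i)}, \text{rem}}$ (and so covered by $\SSS_i$) or is excluded by some $w \in X_t$ with $\circ^{(i)}_w = \da$; in the latter case, our construction forces $\circ^{(3-i)}_w = \ua$, because the maximum excess at $w$ in $\SSS^*$ cannot come purely from the original incoming (which would place $u$ outside $C_t^{c,\circ,\text{rem}}$) and cannot come from $S_i^*$ either. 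The subtuple-feasibility of $\SSS_{3-i}$ then supplies excess at least $d(w, c^{(3-i)}_w) \geq d(w,u)$ at $w$, and by unwinding any virtual incoming chain in $\SSS_{3-i}$'s subtuple --- each step traceable through the border-vertex choice back to its realizing source in $\SSS^*$ and, via the tree-decomposition property, to an actual facility of $\SSS_1 \cup \SSS_2$ --- one extracts a facility in the combined solution covering $u$. The excess condition at each $v \in X_t$ with $\circ_v = \ua$ is settled in the same manner, splitting on whether $e_v^{t,c,\circ}(\SSS^*) \geq d(v,c_v)$ is realised by a facility of some $S_j^*$ (whose contribution is preserved through $\circ^{(j)}_v = \ua$) or by a $\circ_w = \da$ vertex of $X_t$ (whose contribution to $e_v^{t,c,\circ}$ is identical for $\SSS_1 \cup \SSS_2$ and $\SSS^*$).
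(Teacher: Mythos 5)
Your overall architecture is the same as the paper's: split the optimal solution $\SSS^*$ between the two children, define the child subtuples by taking $c'_v=c''_v=b_v^{t,c,\circ}(\SSS^*)$ and directions according to which side realizes the excess, observe that the restricted optima are feasible for these subtuples (so the DP values are bounded by $\mathrm{OPT}_1$ and $\mathrm{OPT}_2$), and then prove that the \emph{combined DP entries} (which need not be the restricted optima) are feasible for $(t,c,\circ,k')$. Your slight deviation in the direction rule (you set $\circ'_v=\ua$ whenever some facility of $S_1^*$ attains the maximum excess, whereas the paper forces $\da$ on both sides whenever incoming excess attains it) is harmless: feasibility of $\SSS_i^*$ still holds, and the case analysis goes through.

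However, there is a genuine gap exactly at the step you yourself identify as the main obstacle. Your one-clause resolution --- ``by unwinding any virtual incoming chain \dots each step traceable through the border-vertex choice back to its realizing source in $\SSS^*$ and, via the tree-decomposition property, to an actual facility of $\SSS_1\cup\SSS_2$'' --- does not work as stated. Tracing a $\da$-vertex $y$ of the subtuple for $t_2$ back through the border-vertex choice yields a source \emph{in $\SSS^*$} (either original incoming excess from some $z\in X_t$ with $\circ_z=\da$, or a facility of $S_1^*$), not a facility of the combined DP solution $\SSS_1\cup\SSS_2$; and a priori the buck could keep being passed back and forth between the two subtuples (with zero-weight edges there is no strict decrease that forces termination). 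The paper closes precisely this hole with a two-level case analysis that cuts the chain: if the $\da$ at $y$ stems from original incoming excess, then that excess already reaches $u$, contradicting $u\in C^{c,\circ,\text{rem}}_t$; and if it stems from a facility of $S_1^*$ realizing the excess at $y$, then that facility also realizes the required excess at the first vertex $w$, contradicting the assignment $\circ'_w=\da$ (i.e.\ that no facility on side $1$ provides $w$'s excess in $\SSS^*$). Hence the only surviving possibility is that the required excess at $w$ in the second subtuple is supplied by an actual facility of $S_2$, which then covers $u$. You need to spell out this contradiction argument (and its analogue for the excess requirements at vertices $p\in X_t$ with $\circ_p=\ua$, which you also delegate to ``the same manner''); without it the claim that the unwinding ``extracts a facility in the combined solution'' is unsubstantiated.
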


\begin{proof}
Since we only consider feasible solutions, the optimal solution is clearly at least as good as the one retrieved by our algorithm. 

We need to show that the solution computed by our algorithm is at least as good as the optimal solution. 
If no feasible solution exists, then $D^\text{val}[t, c, \circ, k'] = \infty$ is clearly correct.
So assume that there exists a feasible solution.
Let $\SSS^* = (S^*, R^*)$ with $S^*=(f_1^*,...,f_\ell^*) \in F_t^\ell$ and $R^*\in \mathcal{R}^{S^*}$ for $\ell\leq k'$ be an optimal solution for $(t,c,\circ,k')$. For the two subinstances defined by $t_1$ and $t_2$, let $\SSS_{t_1}^*=(S_1^*,R_1^*)$ and $\SSS_{t_2}^*=(S_2^*,R_2^*)$ be the optimal facilities in  $V_{t_1}$ and $V_{t_2}\setminus X_t$, respectively, with their corresponding radii.
Additionally let $k_1=\vert S_1^* \vert$ and $k_2=\vert S_2^* \vert$.  In the following, we construct $c', \circ', k_1$ and $c'', \circ'', k_2$ for $t_1$ and $t_2$ such that $\SSS_{t_1}^*$ and $\SSS_{t_2}^*$ are feasible for the respective tuples $(t_1, c', \circ', k_1)$ and $(t_2, c'', \circ'', k_2)$. 
Then we argue that the solution obtained by combining  $D[t_1, c', \circ', k_1]$ and $D[t_2, c''\circ'', k_2]$ is feasible for $(t, c, \circ, k')$. 

For every vertex $w\in X_t$ we set $c_{w}'=c_{w}''=b_{w}^{t,c,\circ}(\SSS^*)$.
Additionally, if $\vert C_{w}^{t,c,\circ}(\SSS^*)\vert \geq 1$, we set $\circ_{w}'=\circ_{w}''=\da$. If this is not the case we set
$\circ'_w = \ua$ iff $\vert F_{w}^{t,c,\circ}(\SSS^*) \cap V_{t_1}\vert \geq 1$ and $\circ''_w = \ua$ iff $\vert F_{w}^{t,c,\circ}(\SSS^*) \cap (V_{t_2} \setminus X_t)\vert \geq 1$. Since every vertex has to be covered, we know that, if the first condition did not apply, either $\circ_w'=\ua$ or $\circ_w''=\ua$.
Per construction of $c',\circ',c'',\circ''$, we can see that $\SSS_1^*$ is feasible for entry $D[t_1, c', \circ', k_1]$ and $\SSS_2^*$ is feasible  for entry $D[t_2, c'', \circ'', k_2]$.

Let $\SSS$ be the solution obtained by combining $\SSS_1 = D[t_1,c',\circ',k_1]$ and $\SSS_2 = D[t_2,c'',\circ'', k_2]$. Since we have $k_1 + k_2 = k'$, we know that $\SSS$ contains at most $k'$ facilities.  

First we show that every vertex in $\Cp$ is covered by $\SSS$.  Assume for contradiction there exists some vertex $p\in \Cp$ which is not covered by the solution $\SSS$. 
The feasibility of $\SSS_1$ and $\SSS_2$ imply that $p\notin  \Cp[t_1][c',\circ'] \cup \Cp[t_2][c'',\circ'']$. 
Assume $p\in V_{t_1}$ (the case $p \in V_{t_2}\setminus X_t$ is analogous). 
Since $p \notin \Cp[t_1][c',\circ']$ there exists some vertex $q \in X_{t_1}$ which covers $p$, i.e., $\circ_q'=\da$ and $d(q,c_q') \geq d(q,p)$.
By construction of $c',\circ'$ we know that $\circ_q'=\da$ iff
\begin{itemize}
    \item[(i)] $\vert C_q^{t,c,\circ}(\SSS^*)\vert \geq 1$, or
    \item[(ii)] the three conditions $\vert C_{q}^{t,c,\circ}(\mathcal{S}^*)\vert = 0$, $|F_q^{t,c,\circ}(\mathcal{S}^*) \cap V_{t_1}\vert = 0$ and $\vert F_q^{t,c,\circ}(\mathcal{S}^*) \cap (V_{t_2}\setminus X_t)\vert \geq 1$ hold.
\end{itemize}

If (i), let $w\in C_q^{t,c,\circ}(\SSS^*)$, then $d(w,c_w) \geq d(w, q) + d(q,c_q') \geq d(w, q) + d(q, p)\geq d(w, p).$
Since $\circ_w = \da$ we get that $p \notin \Cp$, which is a contradiction. 

If (ii), we know that $\circ_q'' = \ua$ and $c_q'' = b_q^{t,c,\circ}(\SSS^*)$. 
If the excess requirements are not already satisfied by incoming excess, i.e., $e_q^{t_2,c'',\circ''}(\SSS^\emptyset) < d(q,c_q'')$, we are done because then there has to exist some facility $f\in S_2$ s.t.\ $r_f \geq d(f, q) + d(q, c_q'') \geq d(f,q) + d(q, p) \geq d(f, p)$, so $p$ is also covered by $f$ in the constructed solution $\SSS$.
So assume $e_q^{t_2,c'',\circ''}(\SSS^\emptyset) \geq d(q,c_q'')$, which means that there exists some vertex $y\in X_{t_2}$ with $\circ_y''=\da$ and $d(y,c_y'') \geq d(y, q) + d(q, c_q'')$.
Per construction, $\circ_y''=\da$ means that we are in one of two cases: Either $\vert C_y^{t,c,\circ}(\SSS^*)\vert \geq 1$, or $\vert C_{y}^{t,c,\circ}(\mathcal{S}^*)\vert = 0$, $\vert F_y^{t,c,\circ}(\mathcal{S}^*) \cap (V_{t_2}\setminus X_t) \vert = 0$ and $\vert F_y^{t,c,\circ}(\mathcal{S}^*) \cap V_{t_1} \vert \geq 1$ hold.
The first case cannot happen as this implies $\vert C_q^{t,c,\circ}(\SSS^*)\vert \geq 1$ because then there exists some vertex $w\in X_t$ with $\circ_w=\da$ and $d(w, c_w) \geq d(w, y) + d(y, c_y'') \geq d(w, y) + d(y, q) + d(q, c_q'')\geq d(w, q) + d(q, c_q'')$, so the incoming excess in $w$ is large enough to satisfy the excess requirement of $q$.
Therefore assume the second case, which includes condition $\vert F_y^{t,c,\circ}(\mathcal{S}^*) \cap V_{t_1} \vert \geq 1$, so there exists some facility $f_1^*\in S_1^*$ with $r_{f_1^*}\geq d(f_1^*, y) + d(y, c_y'')$.
But this implies that $\vert F_q^{t,c,\circ}(\SSS^*)\cap V_{t_1}\vert \geq 1$ because $r_{f_1^*}\geq d(f_1^*, y) + d(y, c_y'') \geq d(f_1^*, y) + d(y, q) + d(q, c_q'') \geq d(f_1^*, q) + d(q, c_q'')$, so there exists some facility in $V_{t_1}$ which provides the excess coverage of $q$.
This is a contradiction to our choice of $\circ_q'=\da$.
Therefore, $p$ is covered by $\SSS$.

At last, we need to guarantee that the combined solution $\mathcal{S}$ fulfills the excess requirements, i.e., $e_{p}^{t,c,\circ}(\mathcal{S}) \geq d(p,c_p)$ for each $p\in X_t$. 
For each vertex $p$ with $\circ_p=\da$ this is trivial.
Consider the case $\circ_p=\ua$  
and assume that the excess requirement is not fulfilled by some incoming excess, otherwise any solution fulfills the excess requirement of $p$. 
Since $\SSS^*$ is feasible there exists a facility $f^* \in S^*$ with radius $r_{f^*} \geq d(p,c_p) + d(f,p)$.  
By our construction we either set $\circ_p'=\ua$ or $\circ_p''=\ua$. Additionally we set $c'_p=c_p''=b_{p}^{t,c,\circ}(\SSS^*)$ for which it holds that $d(p,b_{p}^{t,c,\circ}(\SSS^*))\geq d(p,c_p)$. 
Assume in the following that $\circ_p'=\ua$, the other case follows analogously.
If $e_p^{t_1,c',\circ'}(\SSS^\emptyset) < d(p, c_p')$ we know that there has to exist some facility $f\in S_1$ with $r_f \geq d(f,q) + d(q, c'_p)\geq d(f,q) + d(q, c_q)$, so we fulfill the excess requirements.
So assume in the following $e_p^{t_1, c', \circ'}(\SSS^\emptyset) \geq d(p, c_p')$.
Like in the previous part of the proof we can show that in this case we will have some facility in $S_2$, which fulfills the excess requirements of $p$.

Finally we show that $\SSS$ is at least as good as the optimal solution. Let $\text{OPT} = \text{OPT}_1 + \text{OPT}_2$ be the cost of $\SSS^*$, with $\text{OPT}_1=\sum_{r_i\in R_1^*}g(r_i)$ as the cost of $\SSS_1^*$ and $\text{OPT}_2=\sum_{r_i \in R_2^*}g(r_i)$ as the cost of $\SSS^*_2$. 
Since $\mathcal{S}_1^*$ and $\mathcal{S}_2^*$ is feasible for $D[t_1,c', \circ', k_1]$ and $D[t_2,c'', \circ'', k_2]$, respectively, it follows from the optimality of the DP entries that  $D^\text{val}[t_1, c', \circ', k_1] \leq \text{OPT}_1$ and $D^\text{val}[t_2, c'', \circ'', k_2] \leq \text{OPT}_2$. 
Since the cost of the combined solution of two DP entries is at most as large as the sum of costs of both solutions, we get $\text{OPT}=\text{OPT}_1+\text{OPT}_2 \geq  D^\text{val}[t_1, c', \circ', k_1] + D^\text{val}[t_2, c'', \circ'', k_2] \geq D^\text{val}[t,c, \circ, k']$.
\end{proof}

\subsection{Time complexity}\label{sec:time_complexity}
We can compute the metric closure in $O(\vert V\vert^3)$ and for a given instance $(G,k)$ and solution $\mathcal{S}=(S,R)$ check in time $O(\vert V\vert \cdot k)$ if the solution is feasible.
For a bag $t$ we have DP entries for every $i \in \{0,\dots,k\}$, $c\in V^{X_t}$ and $\circ \in \{\da,\ua\}^{X_t}$. Therefore, there are in total $(2\cdot |V|)^{|X_t|} \cdot k \in O((2\cdot |V|)^{\ell} \cdot k)$ many entries for a bag $t$.

Next we analyze the running time for the individual node types.
For an introduce node, we need to consider all ways to open the introduced vertex $v$ as a facility with a radius defined by a distance to a vertex in $\mathcal{R}_v$ and check if the solution is feasible. Since $|\mathcal{R}_v| \in O(|V|)$, we get for introduce nodes a running time $O((2\cdot |V|)^{\ell} \cdot (k+1) \cdot |V| \cdot (|V| \cdot k)) \in O(2^{\ell} \cdot |V|^{(\ell+2)}\cdot k^2)$.
For a forget node we get a running time of $O(\vert V\vert\cdot k)$ for checking if the solution is feasible.
For a join node, we consider every possible combination of subinstances,
leading to $O((2 \cdot |V|)^{2\cdot \ell}\cdot k)$ entries to check for feasibility, which yields a running time
of $O(2^{2\ell} \cdot |V|^{2\cdot \ell+1}\cdot k^2 )$.

Therefore we can upper bound the time to fill all entries corresponding to some node $t$ by interpreting $t$ as a join node, i.e., $O((2^{2\ell} \cdot |V|^{2\ell+1}\cdot k^2) \cdot ( (2\cdot |V|)^{\ell} \cdot k) ) ) = O(2^{3\ell} \cdot |V|^{3\ell +1} \cdot k^3)$.
Since there are $O(\ell \cdot \vert V\vert)$ nodes in a tree decomposition~\cite{DBLP:books/sp/CyganFKLMPPS15}, we get an overall running time of $O(\ell\cdot \vert V \vert \cdot (2^{3\ell} \cdot |V|^{3\ell +1} \cdot k^3)) = O(\ell \cdot 2^{3\ell} \cdot |V|^{3 \ell + 2} \cdot k^3)$.

\section{Hardness of MSR-A}\label{sec:hardness}
In this section we prove Theorem~\ref{hardness}. The idea is similar to the proof of Gibson et al.~\cite{DBLP:journals/algorithmica/GibsonKKPV10} who show that a problem similar to MSR-A, the \kk{}-cover problem, is NP-hard by a reduction from a special variant of the \textit{planar 3-SAT problem}. We, however, first reduce from the regular 3-SAT problem.

\begin{proof}
Let $\Phi = (X,C)$ be an instance of $3$-SAT where $X=\{x_1,...,x_{n}\}$ are the variables and $\mathcal{C}=\{C_1,...,C_m\}$ are the clauses, consisting of exactly three literals each. Consider the \emph{incidence graph} $G' = (V', E')$ that has a vertex $\ell_i$ for each literal, a vertex $c_j$ for each clause and has an edge between $\ell_i$ and $c_j$ iff literal $\ell_i$ is contained in clause $C_j$. Let $G = (V,E)$ arise from $G'$ by adding a vertex $y_i$ and edges $\{x_i,y_i\}$ and $\{\overline{x_i},y_i\}$ for all $i\in \{1,\dots, n\}$.
Furthermore, define a weight function $w:E\to \mathbb{R}_{\geq0}$ by setting $w(e)=2^{i-1}$ for all edges incident to $x_i$ or $\overline{x_i}$. Let $L$ be the set of vertices corresponding to literals. 

Our instance $\mathcal{I}_\Phi$ for the MSR-A problem is then given by $L$ as the facilities, $V$ as the clients and a distance metric $d$ obtained by the metric closure of $(G,w)$.  See Figure~\ref{fig:hardness_example} for an illustration.
We show that the 3-SAT instance $\Phi$ is satisfiable iff $\mathcal{I}_\Phi$ has a solution of cost at most $2^n-1$.

Assume we have some truth assignment satisfying $\Phi$. For all $i\in\{1,\dots n\}$, either $x_i$ or $\overline{x_i}$ must be true. Open the facility corresponding to the true literal with radius $2^{i-1}$ and set the other radius to $0$. Since we have exactly one true literal for each $i$, all vertices $y_i$ are covered by some facility. Since the assignment is satisfying, we also know that all clauses must contain at least one literal, so each clause vertex is also covered by some facility. The cost of this solution is $\sum_{i=1}^{n}2^{i-1} = 2^n - 1$.

Now assume that we have a solution to the MSR-A instance with cost at most  $2^n -1$.
Note that we may assume that, for every facility $f\in L$, its radius is either $0$ or equal to some distance to a $y_i$ or clause vertex.  Let $L'\subseteq L$ denote the set of facilities with positive radius. We obtain a truth assignment for $\Phi$ by setting all literals corresponding to vertices in $L'$ to true and all other literals to false. 
Now for each variable $x_i$, exactly one of the literals $x_i$ and $\overline{x_i}$ must be true. For the sake of a contradiction, let us first assume that there exists some $i$ for which neither $x_i$ nor $\overline{x_i}$ is true. Consider the largest $i$ for which this happens. This means that both $\ell_i$ and $\overline{\ell_i}$ have radius $0$ and therefore $y_i$ has to be covered by some other facility with index  $i'$. By construction of the graph, the radius of that facility then has to be at least $2^{i'} + 2\cdot 2^{i-1}.$ But this would lead to an overall cost of at least $2 \cdot 2^{i-1} + \sum^{n-1}_{j = i} 2^j = 2^n$ which contradicts the upper bound on the cost of the solution.
On the other hand, assume that there exists some $i$ for which both $x_i$ and $\overline{x_i}$ are true. Again take the largest such $i$. Then the cost of the solution is again lower-bounded by $2 \cdot 2^{i-1} + \sum^{n-1}_{j = i} 2^j = 2^n$.

Thus it follows that exactly one literal is true for each variable, which makes the assignment feasible. Furthermore, the solution for the MSR-A instance is feasible and therefore covers all clause vertices. This translates to every clause containing at least one true literal, making the assignment satisfying as well.
\end{proof}
\begin{figure}
\centering
\includegraphics[width=.6\linewidth]{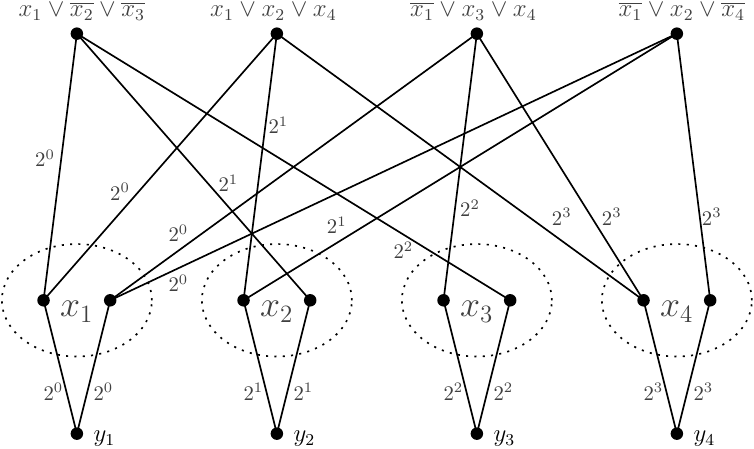}
\caption{Example transformation of a 3-SAT instance with 4 variables and 4 clauses to a MSR-A instance.
}
\label{fig:hardness_example}
\end{figure}
Using the same technique as in \cite{DBLP:journals/algorithmica/GibsonKKPV10}, we can transform any planar 3-SAT instance into an equivalent planar instance for our MSR-A problem. The reason is that the graph we construct is a subgraph of the graph constructed in~\cite{DBLP:journals/algorithmica/GibsonKKPV10}.

\begin{corollary}
The MSR-A problem is NP-hard even for planar metrics.
\end{corollary}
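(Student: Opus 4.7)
The plan is to reuse the reduction from the proof of Theorem~\ref{hardness} essentially verbatim, but starting from \emph{planar} 3-SAT instead of general 3-SAT. Planar 3-SAT is the restriction of 3-SAT to instances whose variable-clause incidence graph (with a vertex per variable and per clause, edges between clauses and the variables they contain) is planar; it is known to be NP-hard. From such an instance $\Phi$, I would build the same graph $G$ as in the previous proof: a vertex $\ell_i$ for every literal, a vertex $c_j$ for every clause, edges between a clause vertex and the vertices corresponding to its literals, and, for each variable $x_i$, an extra vertex $y_i$ with edges to both $x_i$ and $\overline{x_i}$. The weights on the edges incident to the pair $\{x_i,\overline{x_i}\}$ would again be $2^{i-1}$. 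The resulting MSR-A instance $\mathcal{I}_\Phi$ would take $L$ as facilities and $V$ as clients, with the metric closure of $(G,w)$ as the distance function.

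Next I would verify that $G$ is planar whenever the variable-clause incidence graph of $\Phi$ is planar. The key observation is that $G$ is obtained from the planar incidence graph by (i) splitting each variable vertex into the two literal vertices $x_i,\overline{x_i}$ plus the auxiliary $y_i$ connected in a small gadget, and (ii) connecting each clause vertex to the appropriate literal vertex instead of the original variable vertex. Since each original variable vertex has a local rotation around it given by the planar embedding, we can route clause edges to either $x_i$ or $\overline{x_i}$ consistently with that rotation, and the two edges $\{x_i,y_i\},\{\overline{x_i},y_i\}$ can be added inside a small disk around the original variable vertex. Concretely, this is exactly the same local gadget replacement used by Gibson et al.~\cite{DBLP:journals/algorithmica/GibsonKKPV10}; since our $G$ is a subgraph of theirs, planarity is inherited.

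Finally, I would note that the correctness proof from Theorem~\ref{hardness} carries over without change, because it only uses the combinatorial structure of $G$ and the weight scheme, never the particular structure of the SAT instance. Thus $\Phi$ is satisfiable if and only if $\mathcal{I}_\Phi$ admits a solution of cost at most $2^n-1$, and since $\mathcal{I}_\Phi$ is planar by the argument above, MSR-A is NP-hard on planar metrics.

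The main obstacle I anticipate is the planarity verification: one has to argue carefully that the local gadget replacing each variable vertex can be inserted into any planar embedding of the incidence graph without forcing new crossings. Once this is pinned down (most cleanly by invoking the explicit construction of~\cite{DBLP:journals/algorithmica/GibsonKKPV10} and observing subgraph inclusion, as the authors do), the rest is a direct appeal to the already proved Theorem~\ref{hardness}.
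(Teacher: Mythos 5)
Your proposal follows essentially the same route as the paper: reduce from planar 3-SAT using the construction of Theorem~\ref{hardness} unchanged, observe that the correctness argument is oblivious to planarity, and obtain planarity of the constructed graph by noting it is a subgraph of the graph built by Gibson et al.~\cite{DBLP:journals/algorithmica/GibsonKKPV10}. One caveat: your standalone planarity sketch---that around each variable vertex the clause edges can always be routed to $x_i$ or $\overline{x_i}$ ``consistently with the rotation''---is not sound for \emph{plain} planar 3-SAT, since the positive and negative occurrences of a variable may interleave in the cyclic order around its vertex, in which case splitting the vertex into the $x_i$--$y_i$--$\overline{x_i}$ gadget forces crossings. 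This is exactly why Gibson et al.\ reduce from a special variant of planar 3-SAT in which this separation is guaranteed; your fallback of invoking their explicit construction and subgraph inclusion (which is what the paper does) repairs this, but the argument should be stated that way rather than via the rotation claim.
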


\section{Conclusion}

We have presented a polynomial-time exact algorithm for clustering to minimize the sum of radius-dependent costs in graphs of bounded treewidth. We note that our techniques can be generalized to handling cost functions that also depend on the facility. While the number of dynamic-programming entries that our approach relies on is inherently exponential in the treewidth of the underlying graph, the arguably most interesting follow-up question is whether the problem is fixed-parameter tractable w.r.t.\ treewidth.

\bibliographystyle{elsarticle-num} 
\bibliography{biblio}

\end{document}